\newtheorem{theorem}{Theorem}
\newtheorem{definition}[theorem]{Definition}
\newtheorem{proposition}[theorem]{Proposition}
\newtheorem{observation}[theorem]{Observation}
\newcommand{\myleft}{\mathopen{}\mathclose\bgroup\left}
\newcommand{\myright}{\aftergroup\egroup\right}
\DeclareMathOperator{\Tr}{Tr}
\DeclareMathOperator{\ran}{ran}
\DeclareMathOperator{\rank}{rank}
\newcommand{\1}{\mathds{1}}
\newcommand{\mc}[1]{\mathcal{#1}}
\newcommand{\V}{\mc{V}}
\renewcommand{\W}{\mc{W}}
\newcommand{\norm}[1]{\left\Vert #1 \right\Vert}
\newcommand{\normb}[1]{\bigl\Vert #1 \bigr\Vert}
\newcommand{\dnorm}[1]{\norm{#1}_\diamond} 
\newcommand{\EinsNorm}[1]{\norm{#1}_1}
\newcommand{\EinsNormb}[1]{\normb{#1}_1}
\newcommand{\ZweiNorm}[1]{\norm{#1}_2}
\newcommand{\ZweiNormb}[1]{\normb{#1}_2}
\newcommand{\DiaNorm}[1]{\norm{#1}_\square}
\newcommand{\InfNorm}[1]{\norm{#1}_\infty}
\newcommand{\argdot}{{\,\cdot\,}}
\renewcommand{\L}{\operatorname{\mathrm{L}}}
\newcommand{\uc}{Institute for Theoretical Physics, University of Cologne, Germany}
\newcommand{\ug}{Institute of Theoretical Physics and Astrophysics, University of Gda\'{n}sk, Poland}
\newcommand{\syd}{Centre for Engineered Quantum Systems, School of Physics,
The University of Sydney, Australia}
\newcommand{\calt}{Institute for Quantum information and Matter, California Institute of Technology, Pasadena, USA}
\newcommand{\hhu}{Institute for Theoretical Physics,
	Heinrich Heine University D{\"u}sseldorf, Germany}
\title{Note on the saturation of the norm inequalities between diamond and nuclear norm}
\author{Ulrich Michel,
		Martin Kliesch,
		Richard Kueng, 
		and David Gross
	\thanks{
	All authors are with the \uc{}.
	M.\ Kliesch is also with the \ug{} and moved during the final publication process to the \hhu. 
	R.\ Kueng is now affiliated with the \calt{} and D.\ Gross with the \syd.
	\newline
	E-mails: ulrich.peer.michel@gmail.com, science@mkliesch.eu, rkueng@caltech.edu, david.gross@thp.uni-koeln.de. 
	}
	\thanks{
	The work of U.\ Michel was supported by the BCGS under Grant GSC 260. 
	The work of M.\ Kliesch was funded by the National Science Centre, Poland (Polonez 2015/19/P/ST2/03001) within the European Union's Horizon 2020 research and innovation programme under the Marie Sk{\l}odowska-Curie grant agreement No 665778.
	R.\ Kueng and D.\ Gross acknowledge financial support from the Excellence Initiative of the German 
	Federal and State Governments (Grant ZUK 81), the ARO under contract 
	W911NF-14-1-0098 (Quantum Characterization, Verification, and 
	Validation), and the DFG (SPP1798 CoSIP).
	}
}
\begin{document}
\maketitle


\begin{abstract}
The diamond norm plays an important role in quantum information and operator theory. 
Recently, it has also been proposed as a regularizer for low-rank matrix recovery.
The norm constants that bound the diamond norm in terms of the nuclear norm (also known as trace norm) are explicitly known.
This note provides a simple characterization of all operators saturating the upper and the lower bound. 
\end{abstract}

\thispagestyle{empty}

\section{Introduction}
The \emph{diamond norm} \cite{KitSheVya02,Wat11} plays an important role in quantum information and in operator theory, where it equals -- essentially\footnote{
	The diamond norm of a map is the  cb-norm of its adjoint $\dnorm{X}=\|X^\dagger\|_{\mathrm{cb}}$ \cite{Wat12,Pau02}. 
} --
the \emph{norm of complete boundedness} or \emph{cb-norm}.
Recently, some of the present authors proposed its use as a regularizer for low-rank matrix recovery in the context of compressed sensing \cite{KliKueEis16}.
Given few random linear measurements on a low rank matrix, traditionally one reconstructs it by minimizing the nuclear norm subject to the constraints given by the measurements. 
Kliesch et al.~\cite{KliKueEis16} recently showed that replacing the nuclear norm by the diamond norm can be beneficial for the reconstruction. 
In particular, it was proven that for operators saturating the norm inequality from Theorem~\ref{thm:lowerBound} below, diamond norm regularization performs at least as well as (and heuristically better than) the usual nuclear norm regularization \cite{KliKueEis16}. 
The set of operators saturating this norm inequality was explicitly characterized \cite{KliKueEis16} (see Theorem~\ref{thm:lowerBound}). 
This characterization has turned out to also follow from an analysis of typical distances between quantum channels \cite{NecPucPaw16Article}. 

These previous analyses are based on a semidefinite programming (SDP) formulation \cite{Wat12} for the diamond norm---henceforth called \emph{Watrous' SDP}. 
The SDP-formulation renders the proof quite technical.

In this note we present a simple and compact proof of this characterization. 
Moreover, we also characterize all matrices saturating a converse norm inequality. 

\subsection{Notation}
The space of linear maps on $\V$ is denoted by $\L\myleft(\V \myright)$ and the identity by 
$\1_\V \in \L(\V)$.
The \emph{tensor products} of operators or vector spaces are both denoted by $\otimes$. 
A nonnegative real number $\sigma$ is called a \emph{singular value} of an operator $A\in \L(\V)$ if and only if,
there exist normalized vectors $u,v \in \V$ such that $Av=\sigma u$. 
This implies $A^\dagger u =\sigma v$ for the \emph{adjoint} operator $A^\dagger$. 
We call such vectors $u,v$ \emph{left singular} and \emph{right singular vector} for $\sigma$, respectively. 
The \emph{operator square root} of a positive semidefinite operator $A \succeq 0$ is defined to be the unique positive semidefinite operator $\sqrt{A}$ obeying $\sqrt{A}^{\,2} = A$. 
We will use the nuclear norm $\EinsNorm{\, \cdot \,}$, Frobenius norm $\ZweiNorm{\, \cdot \,}$ and  spectral norm $\InfNorm{\, \cdot \,}$ defined by
\begin{align}
\EinsNorm{A} &\coloneqq \Tr\sqrt{A^\dagger A} \, , 
\\
\ZweiNorm{A} &\coloneqq \sqrt{\Tr \myleft[ A^\dagger A\myright]} \, ,
\\
\InfNorm{A}  &\coloneqq \sup_{x\in \V }\frac{\ZweiNorm{Ax}}{\ZweiNorm{x}} \, .
\end{align}
Following \cite{KliKueEis16}, we define the \emph{square norm} of a \emph{bipartite operator} 
$X \in \L(\W \otimes \V)$ 
to be
\begin{equation}\label{eq:DiaNorm}
\DiaNorm{X} 
=
\sup_{\substack{
			A,B \in \L\left(\V \right)\\
			\ZweiNorm{A}=\ZweiNorm{B}=\sqrt{\dim(\V)} }}
		\EinsNormb{ \left(\1_{\W }\otimes A\right) X \left( \1_{\W } \otimes B \right)}  \, .
\end{equation}
The square norm is closely related \cite{KliKueEis16} to the diamond norm $\dnorm{\argdot}$ by 
$\DiaNorm{J(M)} = \dim(\V)\dnorm{M}$, where $M : \L(\V) \to \L(\W)$ is a linear map, $J$ is the Choi-Jamio{\l}kowski isomorphism \cite{Cho75,Jam72} (so that $J(M) \in \L(\W\otimes \V)$). 
Finally, the partial trace $\Tr_\W : \L(\W \otimes \V) \to \L(\V)$ is the linear map defined on product operators
$(A\otimes B) \in \L(\W \otimes \V)$ by $\Tr_{\W }[A\otimes B]= \Tr\myleft[A\myright] B$ and linearly extended to all $\L(\W \otimes \V)$.

\section{Result}
The following equivalence of norms holds (see, e.g., \cite{KliKueEis16}):
\begin{equation}
	\EinsNorm{X} \leq \DiaNorm{X} \leq \dim(\V) \EinsNorm{X} \, .
\end{equation}
We provide a characterization of (i) all operators saturating the upper bound and (ii) the lower bound on $\DiaNorm{X}$. 

\begin{theorem}[Saturation of the lower bound {\cite{KliKueEis16}}]
\hfill\\
Let $X \in \L\left(\W \otimes \V \right)$. 
Then 
$ \EinsNorm{X} = \DiaNorm{X}$ 
if and only if
\begin{equation}
\Tr_\W \myleft[\sqrt{XX^\dagger}\myright]=\Tr_\W \myleft[\sqrt{X^\dagger X}\myright] =\frac{\EinsNorm{X}}{\dim(\V)} \1_\V \, .
\end{equation}
\label{thm:lowerBound}
\end{theorem}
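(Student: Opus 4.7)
The plan is to prove the two implications separately, using a single Cauchy--Schwarz/Hölder inequality for the ``if'' direction and a duality argument for the ``only if'' direction. Let $d=\dim(\V)$.

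For the ``if'' direction, I would start from the polar decomposition $X=U|X|$, where $|X|=\sqrt{X^\dagger X}$ and the partial isometry $U$ satisfies $U|X|U^\dagger=\sqrt{XX^\dagger}$ (both sides are positive semidefinite and their squares both equal $XX^\dagger$). Splitting $X=U|X|^{1/2}\cdot|X|^{1/2}$, the product $(\1_{\W}\otimes A)X(\1_{\W}\otimes B)$ factors as $CD$ with $C=(\1_{\W}\otimes A)U|X|^{1/2}$ and $D=|X|^{1/2}(\1_{\W}\otimes B)$. Hölder's inequality $\EinsNorm{CD}\leq \ZweiNorm{C}\,\ZweiNorm{D}$ for Schatten norms, combined with cyclicity of the trace and the defining property of the partial trace, gives $\ZweiNorm{C}^2=\Tr[A^\dagger A\cdot\Tr_{\W}\sqrt{XX^\dagger}]$ and $\ZweiNorm{D}^2=\Tr[BB^\dagger\cdot\Tr_{\W}\sqrt{X^\dagger X}]$. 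Under the hypothesis and the constraints $\ZweiNorm{A}^2=\ZweiNorm{B}^2=d$, both factors collapse to $\EinsNorm{X}$, so $\EinsNorm{(\1_{\W}\otimes A)X(\1_{\W}\otimes B)}\leq\EinsNorm{X}$. Combined with the trivial opposite inequality (take $A=B=\1_{\V}$), this yields $\DiaNorm{X}=\EinsNorm{X}$.

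For the ``only if'' direction, the hypothesis $\DiaNorm{X}=\EinsNorm{X}$ implies, already for $B=\1_{\V}$, that $\sup_{\ZweiNorm{A}=\sqrt{d}}\EinsNorm{(\1_{\W}\otimes A)X}=\EinsNorm{X}$. Chaining the nuclear/spectral duality $\EinsNorm{Y}=\sup_{\InfNorm{Z}\leq 1}\Re\Tr(Z^\dagger Y)$ with the Frobenius self-duality $\sup_{\ZweiNorm{A}=\sqrt{d}}\Re\Tr(AM)=\sqrt{d}\,\ZweiNorm{M}$ rewrites this as
\[
\sup_{\InfNorm{Z}\leq 1}\ZweiNorm{\Tr_{\W}[XZ^\dagger]}=\EinsNorm{X}/\sqrt{d}.
\]
I would then test the supremum at the unitary $Z=UV^\dagger$ coming from an SVD $X=U\Sigma V^\dagger$, which gives $\Tr_{\W}[XZ^\dagger]=\Tr_{\W}\sqrt{XX^\dagger}=:P$ and hence $\ZweiNorm{P}\leq\EinsNorm{X}/\sqrt{d}$. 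On the other hand, $P\succeq 0$ with $\Tr P=\EinsNorm{X}$, and the elementary bound $\EinsNorm{P}\leq\sqrt{d}\,\ZweiNorm{P}$ (Cauchy--Schwarz on the singular values) forces $\ZweiNorm{P}\geq\EinsNorm{X}/\sqrt{d}$. Equality in Cauchy--Schwarz pins $P$ down to a multiple of $\1_{\V}$, and the trace fixes this multiple as $\EinsNorm{X}/d$. The symmetric argument with $A=\1_{\V}$ fixed and the same $Z$ yields $\Tr_{\W}\sqrt{X^\dagger X}=(\EinsNorm{X}/d)\1_{\V}$.

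The step I expect to be most delicate is the duality chain in the converse: one has to produce the right witness $Z=UV^\dagger$ and then recognise that $\Tr_{\W}[XZ^\dagger]=\Tr_{\W}\sqrt{XX^\dagger}$ saturates the sharp Cauchy--Schwarz bound $\EinsNorm{P}\leq\sqrt{d}\,\ZweiNorm{P}$ precisely when $P$ is a multiple of the identity. The forward direction is then purely computational, up to a short verification of $U|X|U^\dagger=\sqrt{XX^\dagger}$ in the rank-deficient case.
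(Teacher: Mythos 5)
Your proof is correct, and it partly diverges from the note's own argument. In the ``only if'' direction the substance is the same: your witness $Z=UV^\dagger$ is the adjoint of the paper's sign matrix $S_X=VU^\dagger$, the key inequality you extract, $\sqrt{\dim(\V)}\,\ZweiNorm{\Tr_\W\sqrt{XX^\dagger}}\le\EinsNorm{X}$, is the one the paper builds into its chain of lower bounds by choosing $A\propto\Tr_\W\sqrt{XX^\dagger}$ (and $B\propto\Tr_\W\sqrt{\tilde{X}_A^\dagger\tilde{X}_A}$), and your ``equality in Cauchy--Schwarz forces $P\propto\1_\V$'' is exactly Observation~\ref{lem:Norm}; you merely package the step through the duality pairings $\EinsNorm{Y}=\sup_{\InfNorm{Z}\le1}\Re\Tr(Z^\dagger Y)$ and $\sup_{\ZweiNorm{A}=\sqrt{\dim(\V)}}\Re\Tr(AM)=\sqrt{\dim(\V)}\,\ZweiNorm{M}$ rather than exhibiting the optimizing $A,B$ directly. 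Your genuine departure is the ``if'' direction: splitting $X=U|X|^{1/2}\cdot|X|^{1/2}$ and applying $\EinsNorm{CD}\le\ZweiNorm{C}\,\ZweiNorm{D}$ gives the uniform bound $\EinsNorm{(\1_{\W}\otimes A)X(\1_{\W}\otimes B)}^2\le\Tr\bigl[A^\dagger A\,\Tr_\W\sqrt{XX^\dagger}\bigr]\,\Tr\bigl[BB^\dagger\,\Tr_\W\sqrt{X^\dagger X}\bigr]$, valid for every feasible $A,B$, so the trace conditions immediately yield $\DiaNorm{X}\le\EinsNorm{X}$. The paper treats this implication only implicitly, through the parenthetical claim that its chosen $A,B$ ``turn out to yield the optimal value''; your route makes the sufficiency implication explicit and self-contained (the rank-deficient verification $U|X|U^\dagger=\sqrt{XX^\dagger}$ you flag is indeed routine), while the paper's single chain has the merit of identifying the optimizers and delivering both saturation conditions from one computation.
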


This theorem was proven \cite{KliKueEis16} using Watrous' SDP \cite{Wat12} and complementary slackness. 
It also follows from the proof of another new norm inequality \cite[Remark~3]{NecPucPaw16Article}, also based on Watrous' SDP. 

Below we provide a shorter and more direct proof. 
The similar upper bound is saturated as follows. 

\begin{theorem}[Saturation of the upper bound]\hfill\\
Let $X \in \L\left(\W \otimes \V \right)$. Then $  \DiaNorm{X} = \dim(\V)\EinsNorm{X} $ if and only if there exist rank-$1$ projectors 
$\phi\phi^\dagger, \psi\psi^\dagger \in \L\left(\V \right)$ 
such that 
\begin{align}
\Tr_\W \myleft[\sqrt{XX^\dagger}\myright]
&=
\EinsNorm{X} \psi\psi^\dagger 
\quad \text{and} \quad \\
\Tr_\W \myleft[\sqrt{X^\dagger X}\myright] 
&=
\EinsNorm{X} \phi\phi^\dagger \, .
\end{align}
\label{thm:upperBound}
\end{theorem}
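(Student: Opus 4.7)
The plan is to begin with the Hölder-style chain that proves the upper bound and read off both saturation requirements. For $d = \dim(\V)$ and any $A, B$ admissible in \eqref{eq:DiaNorm},
\[
\EinsNormb{(\1_\W \otimes A) X (\1_\W \otimes B)} \leq \InfNorm{A}\,\EinsNorm{X}\,\InfNorm{B} \leq \ZweiNorm{A}\,\ZweiNorm{B}\,\EinsNorm{X} = d\,\EinsNorm{X},
\]
and by compactness the supremum defining $\DiaNorm{X}$ is attained. Hence $\DiaNorm{X} = d\,\EinsNorm{X}$ if and only if there exist admissible $A, B$ that saturate both inequalities simultaneously.

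Equality in the second step $\InfNorm{A}\InfNorm{B} = \ZweiNorm{A}\ZweiNorm{B}$ forces $\InfNorm{A} = \ZweiNorm{A}$ and $\InfNorm{B} = \ZweiNorm{B}$, which is possible only when $A$ and $B$ each have a single nonzero singular value; hence $A = \sqrt d\, u_A v_A^\dagger$ and $B = \sqrt d\, u_B v_B^\dagger$ for unit vectors $u_A, v_A, u_B, v_B \in \V$. For the Schatten-$1$ Hölder inequality $\EinsNorm{CZD} \leq \InfNorm{C}\InfNorm{D}\EinsNorm{Z}$, a singular-value majorisation argument shows that equality holds precisely when $C$ acts as $\InfNorm{C}$ times an isometry on $\ran(ZD)$ and $D^\dagger$ as $\InfNorm{D}$ times an isometry on $\ran(Z^\dagger)$. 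Applied to the rank-one operators $\1_\W \otimes A$ (whose support is $\W \otimes v_A$) and $(\1_\W \otimes B)^\dagger$ (whose support is $\W \otimes u_B$), these conditions reduce to $\ran(X) \subseteq \W \otimes v_A$ and $\ran(X^\dagger) \subseteq \W \otimes u_B$, or, setting $\psi := v_A$ and $\phi := u_B$,
\[
X = (\1_\W \otimes \psi\psi^\dagger)\,X\,(\1_\W \otimes \phi\phi^\dagger).
\]
The auxiliary vectors $u_A, v_B$ play no role: conversely, starting from this product form, any choice $A = \sqrt d\, u\,\psi^\dagger$, $B = \sqrt d\, \phi\, v^\dagger$ with unit $u, v \in \V$ yields $\EinsNormb{(\1_\W \otimes A) X (\1_\W \otimes B)} = d\,\EinsNorm{X}$ by isometry invariance of the nuclear norm.

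It remains to identify this subspace condition with the partial-trace conditions in the statement. If $X = (\1_\W \otimes \psi\psi^\dagger)\,X\,(\1_\W \otimes \phi\phi^\dagger)$, then $\sqrt{XX^\dagger}$ has range in $\W \otimes \psi$ and must therefore factorise as $M \otimes \psi\psi^\dagger$ for some $M \succeq 0$; tracing out $\W$ gives $\Tr[M]\,\psi\psi^\dagger = \EinsNorm{X}\,\psi\psi^\dagger$, and symmetrically for $\sqrt{X^\dagger X}$. Conversely, if $\rho \succeq 0$ on $\W \otimes \V$ satisfies $\Tr_\W[\rho] \propto \psi\psi^\dagger$, then every eigenvector of $\rho$ must, by a Schmidt-decomposition argument on its reduced state, be a product vector with second factor $\psi$, so $\rho = M \otimes \psi\psi^\dagger$; applying this to $\rho = \sqrt{XX^\dagger}$ and $\rho = \sqrt{X^\dagger X}$ recovers the subspace form. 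The main obstacle I anticipate is executing the Hölder saturation analysis cleanly—in particular, verifying that the two equality conditions decouple into one-sided subspace inclusions and that the auxiliary vectors $u_A, v_B$ impose no further restriction on $X$.
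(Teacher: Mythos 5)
Your proposal is correct and follows essentially the same route as the paper's proof: saturate the iterated H\"older chain, force rank-one optimizers $A,B$ via $\InfNorm{\argdot}\le\ZweiNorm{\argdot}$, use the H\"older equality conditions to get $X=(\1_\W\otimes\psi\psi^\dagger)X(\1_\W\otimes\phi\phi^\dagger)$, translate this to the partial-trace conditions, and prove the converse by an explicit rank-one choice of $A,B$. The step you flag as delicate is harmless: the condition $\ran(X^\dagger)\subseteq \W\otimes\mathrm{span}(u_B)$ gives $X=X(\1_\W\otimes u_Bu_B^\dagger)$, hence $\ran\bigl(X(\1_\W\otimes B)\bigr)=\ran(X)$, so the first H\"older condition does upgrade to $\ran(X)\subseteq\W\otimes\mathrm{span}(\psi)$ as you claim; in fact your compactness remark and the Schmidt/support argument in the converse direction supply details the paper leaves implicit.
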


\section{Proofs}
In this section we first state our auxiliary statements and then the proofs of our theorems. 

\subsection{Auxiliary statements}
The concept of \emph{sign functions} of real numbers is extendable to non-Hermitian matrices.

\begin{definition}[Sign matrix] For any matrix $X \in \L(\V)$ with singular value decomposition $X = U\Sigma V^\dagger$ we define its \emph{sign matrix} to be $S_X\coloneqq VU^\dagger$.
\end{definition}
Note that the sign matrix obeys
\begin{align}
X S_X &= U\Sigma U^\dagger=\sqrt{XX^\dagger} \, ,
\\
S_X X &=  V\Sigma V^\dagger = \sqrt{X^\dagger X} \, .
\end{align}
The eigenvectors of $X S_X$ and $S_X X$ are \emph{left singular} and \emph{right singular vectors} of $X$, respectively. 
The saturation results from our theorems are derived using similar saturation results for Schatten norms. 
For any matrix $X \in \L(\V)$ with $n \coloneqq \dim(\V)$ it holds that $\EinsNorm{X} \leq \sqrt n \ZweiNorm{X}$ and we observe that this bound is saturated for unitary operators. 

\begin{observation}[Saturation: $1$ and $2$-norm]\label{lem:Norm}
\hfill\\
Let $X \in \L(\V)$. 
Then $\ZweiNorm{X} = \frac{1}{\sqrt{n}}\EinsNorm{X}$ if and only if $\frac{\sqrt{n}}{\EinsNorm{X}} X$ is unitary. 
\end{observation}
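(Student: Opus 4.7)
The plan is to reduce the equivalence to the equality case of the Cauchy--Schwarz inequality applied to the singular values of $X$. I would fix a singular value decomposition $X = U\Sigma V^\dagger$ with singular values $\sigma_1, \ldots, \sigma_n \geq 0$, so that $\EinsNorm{X} = \sum_{i=1}^n \sigma_i$ and $\ZweiNorm{X}^2 = \sum_{i=1}^n \sigma_i^2$. Applying Cauchy--Schwarz to the vectors $(1, \ldots, 1)$ and $(\sigma_1, \ldots, \sigma_n)$ then yields $\EinsNorm{X} \leq \sqrt{n}\,\ZweiNorm{X}$, with equality precisely when the two vectors are proportional, i.e., when all $\sigma_i$ share a common value $\sigma$.

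Next I would translate this spectral flatness into the unitarity statement. If $\Sigma = \sigma\, \1_\V$ then $X = \sigma\, U V^\dagger$ is $\sigma$ times the unitary $U V^\dagger$, so $X$ is a scalar multiple of a unitary; conversely, any scalar multiple of a unitary has a constant singular-value spectrum and therefore saturates Cauchy--Schwarz. The explicit normalization constant is pinned down by $\EinsNorm{X} = n\sigma$, which recovers the scaling appearing in the observation.

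I do not anticipate any substantial obstacle. The only mild care points are the trivial case $X = 0$, where the normalization is ill-defined but the equivalence holds vacuously, and the direction of the Cauchy--Schwarz equality condition: it demands actual proportionality of the singular-value vector to the all-ones vector, not merely the bound itself.
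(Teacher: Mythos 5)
Your Cauchy--Schwarz argument is sound and is essentially the canonical proof here; note that the paper itself states Observation~\ref{lem:Norm} without proof (it only remarks that unitaries saturate the bound), so there is no competing argument to compare against. Equality in $\EinsNorm{X}\le\sqrt{n}\,\ZweiNorm{X}$ holds precisely when the singular-value vector is proportional to the all-ones vector, i.e.\ when all singular values coincide, which is the same as $X$ being a scalar multiple of a unitary; your handling of the trivial case $X=0$ is also fine.

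However, your final step does not do what you claim. If all singular values equal $\sigma>0$, then $\EinsNorm{X}=n\sigma$, so the unitary you produce is $X/\sigma=\frac{n}{\EinsNorm{X}}X=\frac{\sqrt{n}}{\ZweiNorm{X}}X$, \emph{not} $\frac{\sqrt{n}}{\EinsNorm{X}}X$. Indeed, for $n>1$ no nonzero $X$ can make $\frac{\sqrt{n}}{\EinsNorm{X}}X$ unitary: if $cX$ is unitary then all singular values of $X$ equal $1/c$, whence $\EinsNorm{X}=n/c$ and $c=n/\EinsNorm{X}$. So the constant printed in the observation is off by a factor of $\sqrt{n}$ (presumably a typo; only the weaker statement that $\Tr_\W\bigl[\sqrt{XX^\dagger}\bigr]$ is unitary up to a normalization factor is invoked in the proof of Theorem~\ref{thm:lowerBound}, and that part is unaffected). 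Your assertion that the computation ``recovers the scaling appearing in the observation'' is therefore incorrect as written: you should either restate the observation with $\frac{n}{\EinsNorm{X}}X$ (equivalently $\frac{\sqrt{n}}{\ZweiNorm{X}}X$) unitary, or explicitly flag the discrepancy rather than claim agreement.
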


Schatten norms satisfy H\"older's inequality, i.e., for $A,B \in \L\left(\V\right)$ it holds that $\EinsNorm{AB} \le \InfNorm{A}\EinsNorm{B}$. 
The saturation condition involves the range $\ran(B)$ of $B$.

\begin{proposition}[Saturation of H\"older's inequality]\label{Obs:HoelderSat1}
Let $A,B \in \L(\V)$. 
Then the following are equivalent:
  \begin{compactenum}[i)]
	  \item\label{item:tightness}
	  Tightness in H\"older's inequality, i.e.\ $\EinsNorm{AB} = \InfNorm{A}\EinsNorm{B}$  
	  \item\label{item:every}
	  Every vector $x\in \ran\left(B\right)$ is a right singular vector of $A$ with singular value $\InfNorm{A}$
	  
	  \item\label{item:restriction}
	  The restriction of $\frac{A }{\InfNorm{A}}$ to $\ran(B)$ is an isometry.
	\end{compactenum}
\end{proposition}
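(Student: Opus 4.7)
The plan is to prove (ii) $\iff$ (iii) by a soft spectral observation and then derive (i) $\iff$ (ii) from a single sharpened form of H\"older's inequality.

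For (ii) $\iff$ (iii), since $A^\dagger A \preceq \InfNorm{A}^2 \1_\V$ always holds, a vector $x$ obeys $\ZweiNorm{Ax} = \InfNorm{A}\ZweiNorm{x}$ exactly when $\langle x, (\InfNorm{A}^2 \1_\V - A^\dagger A) x\rangle = 0$, equivalently $A^\dagger A x = \InfNorm{A}^2 x$, i.e.\ $x$ is a right singular vector of $A$ with singular value $\InfNorm{A}$. By polarization, $A/\InfNorm{A}$ restricts to an isometry on $\ran(B)$ if and only if this identity holds for every $x \in \ran(B)$.

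For (i) $\iff$ (ii), I would start from $A^\dagger A \preceq \InfNorm{A}^2 \1_\V$, conjugate by $B$ to obtain $B^\dagger A^\dagger A B \preceq \InfNorm{A}^2 B^\dagger B$, and invoke operator monotonicity of the square root to deduce $\sqrt{B^\dagger A^\dagger A B} \preceq \InfNorm{A}\sqrt{B^\dagger B}$. Taking traces then immediately yields H\"older's inequality
\begin{equation*}
\EinsNorm{AB} = \Tr\sqrt{B^\dagger A^\dagger A B} \leq \InfNorm{A}\Tr\sqrt{B^\dagger B} = \InfNorm{A}\EinsNorm{B}.
\end{equation*}
Since the two operators inside the traces are PSD-ordered, equality of traces forces the stronger operator identity $\sqrt{B^\dagger A^\dagger A B} = \InfNorm{A}\sqrt{B^\dagger B}$; squaring gives $B^\dagger P B = 0$ with $P \coloneqq \InfNorm{A}^2 \1_\V - A^\dagger A \succeq 0$. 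Writing $B^\dagger P B = (P^{1/2} B)^\dagger (P^{1/2} B)$, this is equivalent to $P^{1/2} B = 0$, hence to $P B = 0$, i.e.\ $\ran(B) \subseteq \ker P$, which is precisely statement (ii). The converse is immediate: $P B = 0$ yields $B^\dagger A^\dagger A B = \InfNorm{A}^2 B^\dagger B$ and hence $\EinsNorm{AB} = \InfNorm{A}\EinsNorm{B}$. The only mildly delicate ingredient I anticipate is invoking operator monotonicity of the square root together with the elementary fact that a PSD operator of zero trace vanishes; the degenerate case $A = 0$ is vacuous and requires no separate treatment.
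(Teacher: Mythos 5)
Your proof is correct, and it takes a genuinely different route from the paper's. The paper's appendix proof runs a cycle ii)\,$\Rightarrow$\,iii)\,$\Rightarrow$\,i)\,$\Rightarrow$\,ii): it extends the isometry on $\ran(B)$ to a unitary and uses unitary invariance of the nuclear norm for iii)\,$\Rightarrow$\,i), and for i)\,$\Rightarrow$\,ii) it invokes the singular-value product inequality $\sigma_i(AB)\leq\sigma_i(A)\sigma_i(B)$ (cited from Horn and Johnson) and analyzes when that chain saturates. You instead work entirely at the operator level: from $A^\dagger A \preceq \InfNorm{A}^2\1_\V$ you conjugate by $B$, apply operator monotonicity of the square root to get $\sqrt{B^\dagger A^\dagger A B}\preceq \InfNorm{A}\sqrt{B^\dagger B}$, and then exploit that a PSD-ordered pair with equal traces must coincide, so saturation collapses to $B^\dagger P B=0$ with $P=\InfNorm{A}^2\1_\V - A^\dagger A$, hence $PB=0$ and $\ran(B)\subseteq\ker P$, which is exactly ii); your ii)\,$\Leftrightarrow$\,iii) via $\langle x,Px\rangle=0 \Leftrightarrow Px=0$ and polarization is likewise sound. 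The trade-off: the paper leans on standard singular-value/majorization machinery and an elementary unitary-extension step, while you replace both by the L\"owner--Heinz monotonicity of $t\mapsto\sqrt{t}$ (nontrivial, but standard; alternatively Weyl's eigenvalue monotonicity would suffice here), in exchange for a cleaner equality analysis in which the saturation condition appears as a single operator identity rather than a termwise singular-value argument. One cosmetic remark: when $A=0$ statement iii) involves $A/\InfNorm{A}$ and is ill-defined; this is an edge case of the statement itself (shared with the paper) rather than a gap in your argument, and your i)\,$\Leftrightarrow$\,ii) handles it correctly.
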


A proof of this proposition is provided in the Appendix. Applying it twice yields a similar result for the iterated version of H\"older's inequality
$\EinsNorm{ABC}\leq \InfNorm{A}\EinsNorm{B}\InfNorm{C}$ 
(for $A,B,C \in \L(\V)$).  

\begin{observation}[Saturation of iterated H\"older's inequality]\label{prop:Norm}
Let be $A,B,C \in \L\left(\V\right)$. \\
Then $ \EinsNorm{ABC} = \InfNorm{A}\EinsNorm{B}\InfNorm{C}$ if and only if
\begin{compactenum}[i)]
\item Every vector $x\in \ran\left(B\right)$ is a right singular vector of $A$ with singular value $\InfNorm{A}$ and 
\item Every vector $y \in \ran\left(B^\dagger\right)$ is a left singular vector of $C$ with singular value $\InfNorm{C}$.
\end{compactenum}
\end{observation}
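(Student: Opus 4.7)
The plan is to iterate H\"older's inequality and apply Proposition~\ref{Obs:HoelderSat1} at each step. First I would chain
\[
\EinsNorm{ABC} \le \InfNorm{A}\EinsNorm{BC} = \InfNorm{A}\EinsNorm{C^\dagger B^\dagger} \le \InfNorm{A}\InfNorm{C}\EinsNorm{B},
\]
where the two inequalities are H\"older applied to the pairs $(A, BC)$ and $(C^\dagger, B^\dagger)$, respectively, and the middle equality uses $\EinsNorm{X} = \EinsNorm{X^\dagger}$. Saturation of the iterated bound is then equivalent to simultaneous saturation of both H\"older steps.

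Next I would translate each tightness condition via Proposition~\ref{Obs:HoelderSat1}. The first step is tight iff every $x \in \ran(BC)$ is a right singular vector of $A$ with singular value $\InfNorm{A}$, and the second step (applied with the substitutions $A \mapsto C^\dagger$ and $B \mapsto B^\dagger$) is tight iff every $y \in \ran(B^\dagger)$ is a right singular vector of $C^\dagger$ with singular value $\InfNorm{C^\dagger} = \InfNorm{C}$. Since $C^\dagger y = \sigma x$ is, by the paper's singular-vector convention, the same as $y$ being a left singular vector of $C$ with singular value $\sigma$, the latter condition matches statement~(ii) of the observation verbatim.

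The main obstacle is the mismatch between $\ran(BC)$, produced by the proposition for the first step, and $\ran(B)$, appearing in statement~(i). I would close this gap by showing that condition~(ii) forces $\ran(BC) = \ran(B)$. The inclusion $\ran(BC) \subseteq \ran(B)$ is immediate. For the converse, any $v \in \ran(B^\dagger)$ equals $\InfNorm{C}^{-1} C u$ for some unit vector $u$ by (ii), so $Bv = \InfNorm{C}^{-1} BCu \in \ran(BC)$; combined with $\ran(B) = B(\ran(B^\dagger))$ (because $\ker B = \ran(B^\dagger)^\perp$), this yields $\ran(B) \subseteq \ran(BC)$. The degenerate case $\InfNorm{C} = 0$ is trivial and would be dispatched separately. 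With these three pieces -- iterating H\"older, applying Proposition~\ref{Obs:HoelderSat1} twice, and reconciling the two range descriptions -- both directions of the equivalence follow at once.
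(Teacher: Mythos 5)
Your proof is correct and follows the route the paper intends: the paper gives no separate proof of this Observation beyond the remark that it follows by applying Proposition~\ref{Obs:HoelderSat1} twice, which is exactly what you do. Your extra step reconciling $\ran(BC)$ with $\ran(B)$ via condition~(ii) is a genuine detail that the paper's one-line justification glosses over (an alternative that avoids it is to split the product in both orders, $\EinsNorm{ABC}\le\InfNorm{A}\EinsNorm{BC}$ and $\EinsNorm{ABC}\le\EinsNorm{AB}\InfNorm{C}$, so that Proposition~\ref{Obs:HoelderSat1} yields the conditions directly on $\ran(B)$ and $\ran(B^\dagger)$, with only trivial range inclusions needed for the converse); note also that both your argument and the paper's statement implicitly assume the nondegenerate case $A,B,C\neq 0$, since for $A=0$ or $C=0$ the stated equivalence itself breaks down rather than being ``trivial''.
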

Directly from the definition of the partial trace, it follows that 
\begin{equation}\label{eq:PartialTraceProperty}
\Tr[X(\1_{\W} \otimes B)] = \Tr\bigl[\Tr_\W(X)B\bigr]
\end{equation}
for all $X \in \L(\W\otimes\V)$ and $B \in \L(\V)$. 

\subsection{Proofs} \label{sec:proofs}
Now we use the previous observations to prove our theorems. 

\begin{proof}[Proof of Theorem \ref{thm:lowerBound}]
We recall the definition of the square norm, 
\begin{equation}
\DiaNorm{X} 
=  
\sup_{\substack{ A,B \in \L\left(\V \right)  \\ 
	             \ZweiNorm{A}=\ZweiNorm{B}=\sqrt{\dim(\V)}
	             }}
\EinsNorm{\left(\1_{\W }\otimes A \right) X \left( \1_{\W }\otimes B \right)} .
\end{equation}
For convenience we define $\tilde{X}_A\coloneqq \left( \1_{\W }\otimes A \right) X$ and $n\coloneqq \dim(\V)$. 
Using the unitary invariance of the nuclear norm, the fact that $\Tr[Y] \leq \EinsNorm{Y}$, and Eq.~\eqref{eq:PartialTraceProperty} we obtain
\begin{equation}	
\begin{aligned}
&\phantom{=.}
\EinsNorm{\left(\1_{\W }\otimes A \right) X \left( \1_{\W }\otimes B \right)}
\\
&\geq
\Tr\myleft[S_{\tilde{X}_A} \tilde{X}_A \left( \1_{\W }\otimes B \right)\myright]\\
&=
\Tr\myleft[ \sqrt{\tilde{X}_A^\dagger \tilde{X}_A} \left( \1_{\W } \otimes B\right) \myright]
\\
&= 
\Tr\myleft[ \Tr_\W\Bigl[\sqrt{\tilde{X}_A^\dagger \tilde{X}_A}\, \Bigr] B \myright] \, .
\end{aligned}
\end{equation}
Choosing $B \propto \Tr_\W \sqrt{\tilde{X}_A^\dagger\tilde{X}_A}$ (with proportionality constant chosen such that $\ZweiNorm{B} = \sqrt{n}$), which will turn out to yield the optimal value, we obtain
\begin{equation}\label{eq:intermediate_sqnorm_bound}
\DiaNorm{X}\geq \sup_{A\in \L\left(\V \right)}\left\lbrace \sqrt{n} \ZweiNorm{\Tr_\W \sqrt{\tilde{X}_A^\dagger \tilde{X}_A}}  
  , \, \ZweiNorm{A}=\sqrt{n}\right\rbrace \, .
\end{equation}
Using that $\sqrt{n} \ZweiNorm{Y} \geq \EinsNorm{Y}$ for all $Y \in \L(\V)$ yields
\begin{equation}\label{eq:2NormSaturationTilde}
\begin{aligned}
\sqrt{n}\ZweiNorm{\Tr_\W \sqrt{\tilde{X}_A^\dagger \tilde{X}_A}} 
&\geq 
\EinsNorm{\Tr_\W \sqrt{\tilde{X}_A^\dagger \tilde{X}_A}}
\\
&= 
\EinsNorm{ \left(\1_{\W }\otimes A \right) X } \, .
\end{aligned}
\end{equation}
Analogously as before, we proceed by using the unitary invariance of the $1$-norm and the sign matrix $S_X$ from the right to obtain 
\begin{equation}
\sqrt{n}\ZweiNormb{\Tr_\W \sqrt{\tilde{X}_A^\dagger \tilde{X}_A}} 
\geq 
\EinsNormb{ \left(\1_{\W }\otimes A \right) \sqrt{XX^\dagger} } \, .	
\end{equation}
We choose $A \propto \Tr_\W\sqrt{XX^\dagger}$ and combine it with the bound \eqref{eq:intermediate_sqnorm_bound} to finally arrive at 
\begin{equation}\label{eq:2NormSaturationNormal}
\begin{aligned}
\DiaNorm{X} &\geq  \sqrt{n}\ZweiNorm{\Tr_\W \sqrt{X X^\dagger }} 
\geq  
\EinsNorm{\Tr_\W \sqrt{X X^\dagger }} 
\\
&=
\EinsNorm{X} \, .
\end{aligned}
\end{equation}
The equality of both norms holds if and only if all inequalities are saturated. 
In particular we have that 
\begin{equation}
\sqrt{n}\ZweiNormb{\Tr_{\W }\sqrt{X X^\dagger}} 
= 
\EinsNormb{\Tr_{\W }\sqrt{X X^\dagger}}
\end{equation}
in the bound \eqref{eq:2NormSaturationNormal} if and only if $\Tr_{\W }\sqrt{XX^\dagger}$ is unitary up to a normalization factor (Observation~\ref{lem:Norm}). 
Since $\Tr_{\W }\sqrt{XX^\dagger}$ is also positive semidefinite we obtain
$\Tr_{\W }\sqrt{XX^\dagger} = \frac{\EinsNorm{X}}{n} \1$ as the only possible operator. 
From the bound \eqref{eq:2NormSaturationTilde} follows, with the same argument, that $\Tr_{\W }\sqrt{\tilde{X}_A^\dagger\tilde{X}_A} \propto \1$. This relation implies the second desired identity $\Tr_{\W }\sqrt{X^\dagger X}=\frac{\EinsNorm{X}}{n} \1$.
\end{proof}

\begin{proof}[Proof of Theorem \ref{thm:upperBound}]
$"\Rightarrow"$: 
We use the hypothesis and split the $1$-norm of products with an iterated H\"older's inequality in order to maximize separately over $A$ and $B$ in the definition of the square norm \eqref{eq:DiaNorm}, 
\begin{equation}
\begin{aligned}
n\EinsNorm{X} 
&=
\DiaNorm{X} 
\\
&\leq
  \sup_{ \substack{A,B\in\L\left(\V \right) \\
  		 \ZweiNorm{A}=\ZweiNorm{B}=\sqrt{n}
  		}}
  	\InfNorm{\1_{\W }\otimes A} \EinsNorm{X} \InfNorm{\1_{\W } \otimes B} 
\\
&=
n \EinsNorm{X} \, .
\end{aligned}
\end{equation}
The largest possible value is attained with rank $1$ matrices $A= \sqrt{n} \nu \psi^\dagger$, $B=\sqrt {n}\phi \mu^\dagger$ for normalized $\nu,\mu,\phi,\psi \in \V $. 
 Since the inequality is saturated we conclude with Observation~\ref{prop:Norm} that $X = Y \otimes \psi \phi^\dagger$ for an arbitrary non-zero operator $Y\in \L\left(\W \right)$, and the result follows.

$"\Leftarrow"$: 
From the hypothesis we have that $X= Y \otimes \psi \phi^\dagger$. Instead of maximizing in the definition of the square norm \eqref{eq:DiaNorm} we just choose $A\propto \psi\psi^\dagger$, $B \propto \phi\phi^\dagger$. This choice yields
\begin{equation}
\begin{aligned}
n \EinsNorm{X} &\geq 
\DiaNorm{X} 
\\
&\geq
n \EinsNorm{ \left(\1_{\W }\otimes \psi\psi^\dagger\right) \left( Y\otimes \psi\phi^\dagger \right) \left(\1_{\W }\otimes \phi\phi^\dagger \right) } 
\\
&=
n \EinsNorm{X}
\end{aligned}
\end{equation}
and the result follows.
\end{proof}

\bibliographystyle{IEEEtran}
\bibliography{martin}

\newpage
\section*{Appendix}
In this appendix we provide the proof of the saturation condition for H\"older's inequality.
\begin{proof}[Proof of Proposition~\ref{Obs:HoelderSat1}]
\hfill\newline
``\ref{item:every}) $\Rightarrow$ \ref{item:restriction})'':
$\frac{A}{\InfNorm{A}}$ maps an arbitrary $\phi \in \ran(B)$ from one singular basis to another. Hence, restricted to $\ran(B)$ it is an isometry.\\
``\ref{item:restriction}) $\Rightarrow$ \ref{item:tightness})'': 
We represent the isometry $\frac{A}{\InfNorm{A}}$ with a unitary matrix $U$ that fulfills $\InfNorm{A}UB =AB$.
 Using the unitary invariance of the nuclear norm, we obtain
$\EinsNorm{AB}=\InfNorm{A}\EinsNorm{UB}=\InfNorm{A}\EinsNorm{B} \, . $\\
``\ref{item:tightness}) $\Rightarrow$ \ref{item:every})'':
We denote the $i$-th largest singular value of a matrix $X$ with $\sigma_i(X)$.
 We upper bound the nuclear norm by splitting the singular values of the matrix as \cite[Problem~3.3.4]{Horn1991} 
 \begin{equation}
 	\EinsNorm{AB}=\sum_{i=1}^n\sigma_i(AB) \leq \sum_{i=1}^n\sigma_i(A)\sigma_i(B) \, .
 \end{equation}
Upper bounding the singular values of $A$ by the largest singular value yields 
\begin{equation}
\begin{aligned}
	\sum_{i=1}^n\sigma_i(A)\sigma_i(B) 
	&\leq 
	\sum_{i=1}^{\rank(B)}\InfNorm{A}\sigma_i(B)
	\\
	&=\InfNorm{A}\EinsNorm{B}\, .
\end{aligned}
\end{equation}
 By assumption, every inequality is saturated. In particular, the largest $\rank(B)$ singular values of $A$ are all equal. 
 Furthermore, $\sigma_i(AB)=\InfNorm{A}\sigma_i(B)$ and, thus, all vectors $x \in \ran(B)$ are right singular vectors of $A$ with singular value $\InfNorm{A}$.
\end{proof}

\end{document}